\def\P{\mathbb{P}}
\def\E{\mathbb{E}}
\def\k{\kappa_{\alpha}}
\newtheorem{theorem}{Theorem}
\title{Confidence Intervals Based on the Modified Chi-Squared Distribution and its Applications in Medicine}
\author[1]{Mulan Wu}
\author[2]{Mengyu Xu}
\author[3]{Dongyun Kim}
\affil[1]{\small University of Chicago Laboratory High School, Chicago, IL, 60637 USA}
\affil[2]{Department of Statistics and Data Science, College of Sciences, University of Central Florida, Orlando, FL 32816 USA}
\affil[3]{Department of Statistics, College of Engineering and Computing, Fairfax, VA 22030 USA}
\affil[*]{Corresponding author: \texttt{mengyu.xu@ucf.edu}}
\date{}  
\begin{document}
\maketitle

\begin{abstract} 
Small sample sizes in clinical studies arises from factors such as reduced costs, limited subject availability, and the rarity of studied conditions. This creates challenges for accurately calculating confidence intervals (CIs) using the normal distribution approximation. In this paper, we employ a quadratic-form based statistic, from which we derive more accurate confidence intervals, particularly for data with small sample sizes or proportions. Based on the study, we suggest reasonable values of sample sizes and proportions for the application of the quadratic method. Consequently, this method enhances the reliability of statistical inferences. We illustrate this method with real medical data from clinical trials.

\textbf{Keywords}: confidence intervals, clinical research, small sample sizes, binomial distribution, coverage probability


\end{abstract}


\section{Introduction}

\indent Confidence intervals (CIs) are useful in clinical studies because they provide a range of values within which the true population parameter is likely to fall, which subsequently emphasizes the importance of ensuring their accuracy and precision. Let $B_1,B_2,\ldots, B_n$ be identical and independent Bernoulli($p$) observations in a random sample, where $p$ is the unknown population proportion of interest. When the sample size is sufficiently large and the proportion is reasonably high ($p > 0.1$), the standard method, based on a limiting normal distribution, can be used to calculate a confidence interval. 

Specifically, let $X=\sum_{i=1}^n B_i$,  the point estimator of $p$ is $\hat p=n^{-1}X$. Also let $q=1-p$ and $\hat q=1-\hat p$. It is known that $n\hat p$ follows a Binomial($n,p$) distribution. It is widely accepted that if $npq$ is large, then
the distribution of $\hat p$ can be approximated by $N(p,n^{-1}pq)$, as a result of the Central Limit Theorem. Therefore, the Wald method constructs the following confidence interval for the proportion $p$ with the nominal confidence level $1-\alpha$:
\begin{align}\label{eq:standard}
    CI_W=\left[\hat p-z_{1-\alpha/2}\sqrt{\frac{1}{n}\hat p \hat q}, \hat p+z_{1-\alpha/2}\sqrt{\frac{1}{n}\hat p\hat q}\right],
\end{align}
where $z_{1-\alpha/2}$ is the $(1-\alpha/2)$-th quantile of the standard Normal distribution. Statistical textbooks suggest various rules of thumb for the application, such as  \begin{align}\label{eq:rule_of_thumb}
    \min(np,nq)\ge 5 \mbox{ or }10,\; \min(n\hat p, nq)\ge 5 \mbox{ or }10,  \qquad \mbox{or } n\hat p\hat q\ge 5 \mbox{ or } 10.
\end{align}

The Wald method is a standard and widely accepted approach when the rule-of-thumb criterion (\ref{eq:rule_of_thumb}) holds. Despite its simplicity, this standard method has several well-recognized limitations. In practice, the standard confidence interval may exhibit an overshoot phenomenon, where the lower limit falls below 0 or the upper limit exceeds 1. Additionally, in cases where $\hat{p} = 0$ or $\hat{p} = 1$, with a substantial probability when $p$ is near $0$ or $1$, the Wald confidence interval degenerates with a width of zero. Moreover, the coverage rate of the standard confidence interval is known to exhibit an oscillation phenomenon (see, e.g., \cite{brown_interval_2001}), which is characterized by sudden and drastic changes due to small changes in $n$ or $p$, when the other parameter remains fixed. The oscillation occurs even for moderately large $n$ and for the values of $p$ that are not near the boundary. When a data set has a small sample size or the proportion is near zero or one, the confidence interval based on the normal approximation can be inaccurate and insufficient. 
 
In this study, we are particularly interested in the data with small sample sizes and near-the-boundary proportions.  Small sample sizes are often seen in clinical studies, as a result of various factors such as limited subject availability, reduced costs, rarity of studied conditions, etc. 

Due to the importance of constructing an accurate confidence interval of the Binomial proportion, a large body of literature has been devoted to the investigation of alternatives to the Wald method. 
Denote $\tilde p=\tilde X/\tilde n, \tilde X=X+z_{1-\alpha/2}^2/2$, and $\tilde n=n+z_{1-\alpha/2}^2$. The Wald confidence interval with a continuity correction is proposed as 
\begin{align*}
     CI_{WCC}=\left[\hat p-z_{1-\alpha/2}\sqrt{\frac{1}{n}\hat p(1-\hat p)+\frac{1}{2n}}, \hat p+z_{1-\alpha/2}\sqrt{\frac{1}{n}\hat p(1-\hat p)+\frac{1}{2n}}\right].
\end{align*}
Wilson's confidence interval, due to \cite{wilson_probable_1927}, is the solution to equating the Wald statistic to the critical value, resulting in Wilson's confidence interval,
\begin{align}\label{eq:Wilson}
    CI_{Wilson}=\tilde p\pm z_{\alpha/2}\frac{\sqrt{n\hat p\hat q+z^2_{1-\alpha/2}/4}}{\tilde n}.
\end{align}
\cite{agresti_approximate_1998} proposed the following confidence interval
\begin{align}\label{eq:AC}
CI_{AC}=\tilde p\pm z_{1-\alpha/2}\sqrt{\frac{\tilde p(1-\tilde p)}{\tilde n}}.
\end{align}
\cite{clopper_use_1934} proposed the Clopper-Pearson (CP) method based on the exact Binomial tests. For a more comprehensive review and comparison, see  \cite{agresti_approximate_1998,agresti_simple_2000,blyth_binomial_nodate,brown_interval_2001,casella_refining_1986,clopper_use_1934,mcgrath_binomial_2024,newcombe_two-sided_1998,pires_interval_2008,reed_better_2007,reiczigel_confidence_2003,tobi_small_2005,vollset_confidence_1993} among others. 

In this paper, we consider the construction of the confidence interval based on the quadratic-form statistic discussed in \cite{xu_pearsons_2019}, which has a $\chi_1^2$ limiting distribution. The proposed confidence interval incorporates a correction to improve the accuracy of the coverage rate.

\section{Methods}\label{sec2}
Now define the quadratic form statistic of the Bernoulli sample as below.
\begin{equation}
\label{eq:xu-eq}
^2\chi := n^{-1}\sum_{1\le i\ne j\le n}\frac{(B_{i}-p)(B_j-p)}{pq}+1.
\end{equation}
In Theorem \ref{thm:chi2}, it is shown that $^2\chi\to_{\mathcal D}\chi^2_1$. Consequently, denoting the $(1-\alpha)th$ percentile of the $\chi^2_1$ as $\k= z_{1-\alpha/2}^2$, we construct a confidence interval, $CI_{new}$, that satisfies 
\[
\P(p\in CI_{new})=\P(^2\chi\le \k).
\]
Routine calculation reveals that 
\begin{align}\label{eq:test_stat}
^2\chi = \frac{1}{p(1-p)}n(\hat p-p)^2-\left(\frac{\hat p}{p}+\frac{1-\hat p}{1-p}-2\right),    
\end{align}
and
\begin{align}\label{eq:CI}
CI_{new}=\frac{(n-1)\hat p+(\k-1)/2}{n+\k-2}\pm \frac{\Big[n\k\hat p\hat q+(\k-1)^2/4-\hat p\hat q\Big]^{1/2}}{n+\k-2}.    
\end{align}

The proposed confidence interval (\ref{eq:CI}) is symmetric, with a correction of the sample proportion in the center. The shift is asymptotically negligible and adapts to the value of the sample proportion. That is, the correction term is positive for small $\hat p$'s and negative for large $\hat p$'s.

In addition, the proposed confidence interval (\ref{eq:CI}) is free from the overshooting issue, i.e., $L_\alpha\ge 0$ and $U_\alpha \le 1$.

The margin of error of $CI_{new}$, i.e., $\Big[n\k\hat p\hat q+(\k-1)^2/4-\hat p\hat q\Big]^{1/2}$,  is always positive even when $\hat p\hat q=0$, except at the rarely used $68.27\%$ confidence level. This property provides a remedy for the degeneracy of the standard confidence interval in extreme cases where $\hat p\hat q=0$. Furthermore, the width of the proposed confidence interval is asymptotically equivalent to that of the standard confidence interval when the sample proportion is away from the boundary for large sample sizes. For small $n$ with $\hat p\hat q$ not near the boundary, the proposed confidence interval can be more parsimonious. 
In Figure \ref{fig:CI_width}, we present a comparison of the margins of error of the proposed and standard $0.95$ confidence intervals as functions of the observed values of $\hat p\hat q$ and the sample size $n$, for $5\le n\le 100$. A discussion of the expected margin of error is provided in Section \ref{sec:cover_rate+ME}. 
\begin{figure}
    \centering
    \includegraphics[width=0.8\linewidth]{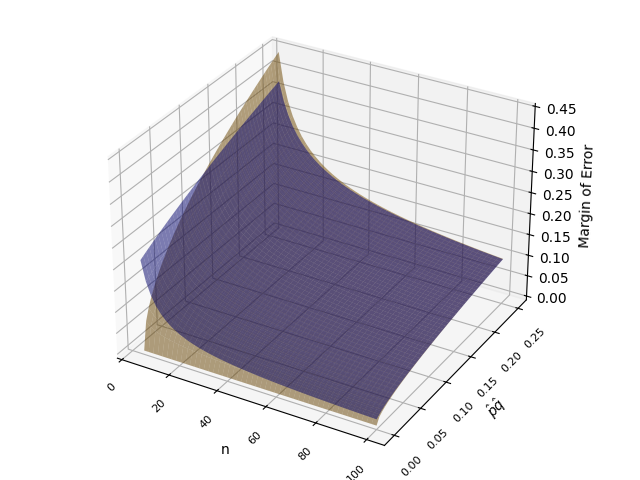}
    \caption{The margin of error of the $95\%$ confidence intervals as functions of $n$ and the value of $\hat p\hat q$. The width of $CI_{W}$ is shown in brown, and that of $CI_{new}$ in blue, respectively. }
    \label{fig:CI_width}
\end{figure}
  

Theorem \ref{thm:chi2} below provides the theoretical guarantee for the asymptotically honest coverage rate of the proposed confidence interval.

\begin{theorem}\label{thm:chi2}
Suppose $np\to \infty$ and $n(1-p)\to\infty$ as $n\to\infty$. Then
\begin{align}
    \sup_t|\P(^2\chi\le t)-F_{\chi_1^2}(t)|\to 0,
\end{align}
where $F_{\chi_1^2}(t)=\P(\chi^2_1\le t)$ is the distribution function of the $\chi_1^2$ random variable. 
\end{theorem}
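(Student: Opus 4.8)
The plan is to reduce the quadratic form to a squared standardized sum plus lower-order terms, so that the limit law is produced by the ordinary Central Limit Theorem. Write $Y_i = (B_i-p)/\sqrt{pq}$ for the standardized observations, so that $\E Y_i = 0$ and $\E[Y_i^2] = 1$, and set $S_n = \sum_{i=1}^n Y_i$. Using $\sum_{i\ne j}Y_iY_j = S_n^2 - \sum_{i=1}^n Y_i^2$, the statistic decomposes as
\begin{equation*}
^2\chi = \frac{S_n^2}{n} - \frac{1}{n}\sum_{i=1}^n Y_i^2 + 1 .
\end{equation*}
If I can show that $S_n/\sqrt{n}$ is asymptotically standard normal and that $n^{-1}\sum_{i}Y_i^2 \to_{\mathrm P} 1$, then $S_n^2/n \to_{\mathcal D}\chi_1^2$ while the remaining two terms cancel in the limit, so Slutsky's theorem delivers $^2\chi \to_{\mathcal D}\chi_1^2$.

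The delicate feature is that $p=p_n$ is permitted to drift toward $0$ or $1$, since only $np\to\infty$ and $nq\to\infty$ are assumed; hence the $Y_i$ form a triangular array and the classical CLT does not apply verbatim. I would instead verify the Lyapunov condition with exponent $\delta=2$. The $Y_i$ are i.i.d.\ within each row with $\mathrm{Var}(S_n)=n$, and a direct computation on the two-point distribution gives $\E[Y_i^4] = (pq)^{-1}-3$, so the Lyapunov ratio equals
\begin{equation*}
\frac{1}{n^2}\sum_{i=1}^n \E[Y_i^4] = \frac{1}{n}\Big(\frac{1}{pq}-3\Big) .
\end{equation*}
The crux is the estimate $npq\to\infty$: because $\max(p,q)\ge 1/2$ we have $npq\ge \tfrac12\min(np,nq)\to\infty$, so the ratio tends to $0$, the Lyapunov CLT gives $S_n/\sqrt{n}\to_{\mathcal D} N(0,1)$, and the continuous mapping theorem yields $S_n^2/n \to_{\mathcal D}\chi_1^2$.

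For the remainder, $W_i:=Y_i^2$ satisfies $\E W_i = 1$ and $\mathrm{Var}(W_i)=\E[Y_i^4] - 1=(pq)^{-1}-4$, whence $\mathrm{Var}(n^{-1}\sum_i W_i)=n^{-1}((pq)^{-1}-4)\to 0$ by the same bound on $npq$; Chebyshev's inequality then gives $n^{-1}\sum_i Y_i^2 \to_{\mathrm P} 1$. Assembling the two limits through Slutsky's theorem establishes the pointwise convergence $\P(^2\chi\le t)\to F_{\chi_1^2}(t)$ for every $t$. Finally, because $F_{\chi_1^2}$ is continuous, P\'olya's theorem promotes pointwise convergence of the distribution functions to the uniform convergence $\sup_t|\P(^2\chi\le t)-F_{\chi_1^2}(t)|\to 0$ asserted in the statement. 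I expect the main obstacle to be precisely the triangular-array CLT step — arranging the moment control so that the single bound $npq\to\infty$ substitutes for the stationarity that the ordinary CLT would rely on.
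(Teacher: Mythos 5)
Your proof is correct, and it takes a genuinely different route from the paper. The paper's own proof is a one-step citation: it rewrites $^2\chi$ as $n^{-1}\sum_{i\ne j}X_i^\top X_j+1$ with $X_i=(B_i-p)(p^{-1/2},-q^{-1/2})^\top$, verifies that the quantity $L_1=(p^{-1}+q^{-1})/(2^{3/2}n)$ tends to zero under the hypotheses, and then invokes Theorem 2 of \cite{xu_pearsons_2019}, which delivers the uniform (Berry--Esseen type) approximation directly, in fact with a rate controlled by $L_1$. You instead give a self-contained argument: the scalar decomposition $^2\chi=S_n^2/n-n^{-1}\sum_i Y_i^2+1$, a Lyapunov condition with $\delta=2$ for the triangular array (your moment computation $\E[Y_i^4]=(pq)^{-1}-3$ and the bound $npq\ge\tfrac12\min(np,nq)\to\infty$ are both correct), Chebyshev for the law of large numbers on $n^{-1}\sum_i Y_i^2$, Slutsky to assemble the pieces, and P\'olya's theorem to upgrade pointwise convergence of the distribution functions to uniform convergence, which is legitimate since $F_{\chi_1^2}$ is continuous. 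Your approach buys elementary self-containedness and correctly handles the key difficulty (that $p=p_n$ may drift to the boundary, so the classical i.i.d.\ CLT does not apply verbatim); the paper's approach buys brevity and, implicitly, a quantitative rate of convergence that your qualitative argument does not provide, though no rate is claimed in the statement being proved.
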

\begin{proof}[Proof of Theorem \ref{thm:chi2}]
   The theorem is a consequence of Theorem 2 in \cite{xu_pearsons_2019}. 
   Let $X_i=(B_i-p)(p^{-1/2},-q^{-1/2})^\top$. Then
    \[
    ^2\chi=n^{-1}\sum_{i\ne j\le n}X_i^\top X_j+1.
    \]
    If $np\to \infty$ and $n(1-p)\to\infty$ as $n\to\infty$, then 
   \[
   L_1:=\frac{p^{-1}+q^{-1}}{2^{3/2}n}\to 0.
   \]
   Then the result follows from Theorem 2 in \cite{xu_pearsons_2019} applies.
\end{proof}
It follows that
 \begin{align}
        \lim_{n\to\infty} \P(p\in CI_{new})=1-\alpha,
    \end{align}
under the conditions of Theorem \ref{thm:chi2}.


\section{Coverage Rate and Expected Margin of Error}\label{sec:cover_rate+ME}
Since that $X\sim Binomial(n,p)$, we compute the coverage probability and the expected margin of error in this section for the proposed confidence interval and the following popular intervals: standard CI, the Agresti-Coull CI and Wilson's CI. The coverage probability and the expected margin of error for a confidence interval $CI_{\alpha,n}(X)=[L_{\alpha,n}(X),U_{\alpha,n}(X)]$ are respectively computed in (\ref{eq:coverage prob})  and (\ref{eq:ME}).
\begin{eqnarray}\label{eq:coverage prob}
    C(n,p)&:=
    \sum_{k=0}^n \mathbf1 \{p\in[L_{\alpha,n}(k),U_{\alpha,n}(k)]\} p_{n,p}(k).\\ \label{eq:ME}
    \E_{n,p}\Big(ME(CI_{\alpha,n}(X))\Big)&:=\sum_{k=0}^n 2^{-1}(U_{\alpha,n}(k)-L_{\alpha,n}(k))p_{n,p}(k).
\end{eqnarray}
Figure \ref{fig:coverage_prob} and Figure \ref{fig:E_ME} present the coverage probabilities and the expected margins of error for standard CI, the Agresti-Coull CI and Wilson's CI with a 0.95 norminal confidence interval. It can be found that all of the four CIs exhibit an oscillation behavior, while the standard confidence interval is outperformed by all the other CIs in terms of the sufficiency and accuracy of coverage probability. Compared to the Agresti-Coull and Wilson's CI, the proposed method is advantageous for $p$'s near $0$ and $1$, especially for small $n$'s. Indeed, the proposed method has coverage probabilities always near or above the nominal confidence level for $p$ near the boundary, where it is more parsimonious in terms of the expected length. For $p$'s near the center, the coverage probability of the three methods are similar. Wilson's CI and the Agresti-Coull CI are better in respect to the expected length. In the appendix, we present the coverage rates and the expected margins of error of the four 0.95-CIs against at varying $n=5,6,\ldots,100$, with $p=0.01,0.05,0.1$ and $0.2$ respectively in Figures \ref{fig:coerage_in_n} and \ref{fig:E_ME_in_n}. The observations confirm the findings above. We also present results for confidence levels 0.9 and 0.99 in the appendix. At 0.9 confidence level, the coverage rate of the proposed method is more reliable for small $n$ and $p$ near the boundary ($p\le 0.1$ when $n\le 10$), and Agresti-Coull is preferred when moderately large sample is available. For the $0.99$ confidence level, the proposed method is preferrable as it has sufficient coverage and a parsimonous coverage rate for large and small $n$ across $0\le p\le 1$. 

These findings suggest that while the new method in (\ref{eq:CI}) is particularly beneficial for small proportions and small sample size. We provide the following practical recommendations for the proposed method for small-sample 0.95 confidence intervals. For extremely small samples, where $n\le 10$, the method outperforms the others when $p\le 0.2$ or $p\ge 0.8$. For moderately small samples with $n>10$, the proposed method is advantageous when $p\le 0.1$ or $p\ge 0.9$. For the other cases with moderately large sample size and $p$ off the boundary, the proposed confidence interval has similar performance as Wilson's confidence interval and Agresti-Coull's confidence interval, and they are all preferred over the standard confidence interval.
\begin{figure}
    \centering
    \includegraphics[width=1\linewidth]{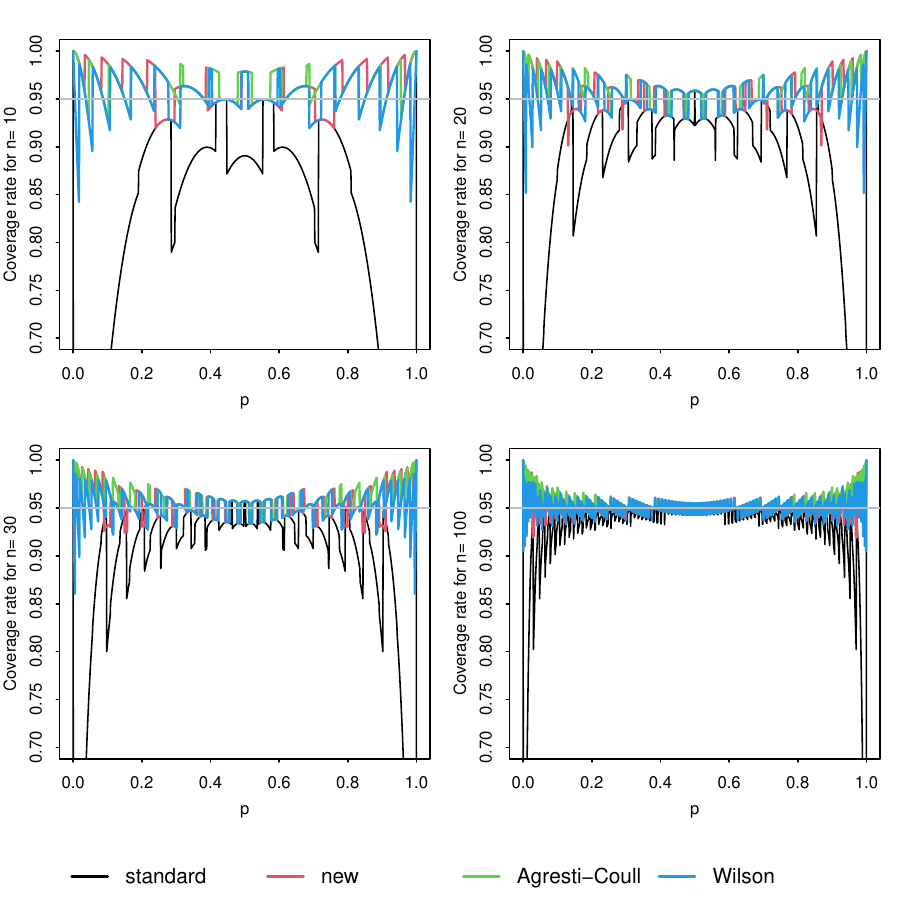}
    \caption{Coverage Probabilities of the 0.95- Standard CI,  proposed CI, the Agresti-Coull CI, and Wilson's CI for $n=10,20,30$ and $100$ at varying $p\in [0,1]$. }
    \label{fig:coverage_prob}
\end{figure}

\begin{figure}
    \centering
    \includegraphics[width=1\linewidth]{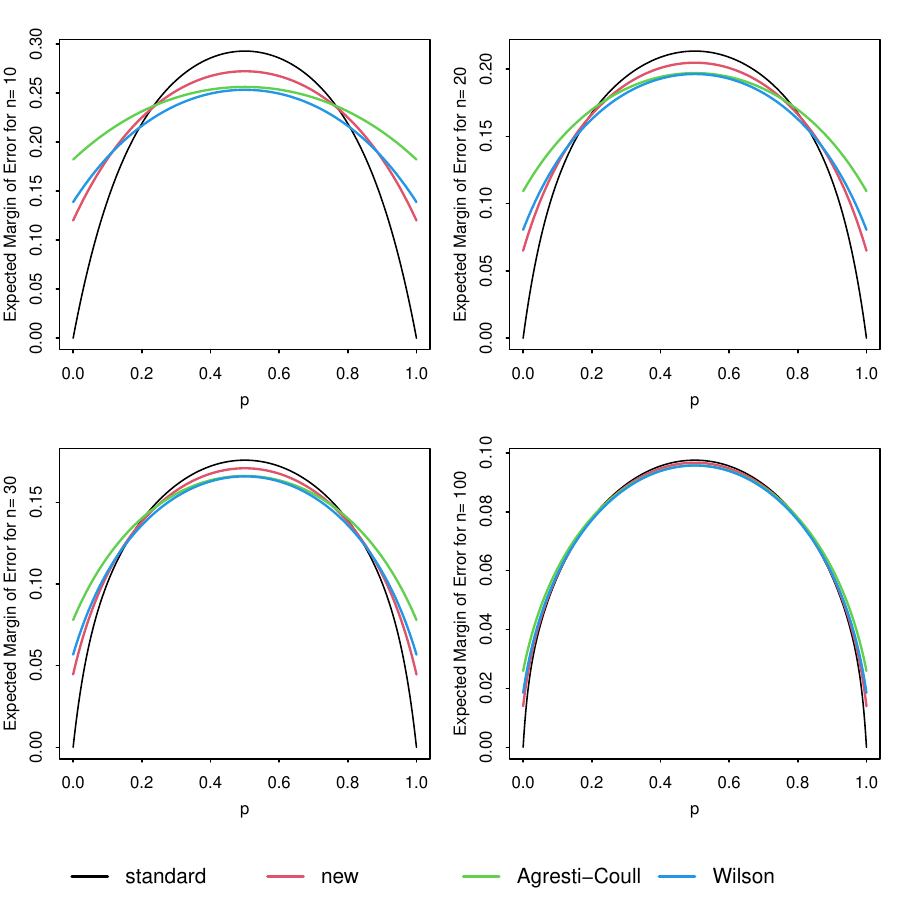}
    \caption{Expected margins of error of the 0.95- Standard CI,  proposed CI, the Agresti-Coull CI, and Wilson's CI for $n=10,20,30$ and $100$ at varying $p\in [0,1]$. }
    \label{fig:E_ME}
\end{figure}
\section{Simulation Results}\label{sec3}
In the simulation study, we generate Binomial($n,p$) random variables for $N=10,000$ times and compute the empirical coverage probability. The performance of the proposed confidence interval is evaluated at $n=10, 30, 50, 100$ and $p=0.01,0.05,0.09$, and is compared with that of the standard confidence interval.

\begin{table}[!ht]
 \caption{Comparison of Empirical Probabilities Across Sample Sizes and Proportions with Different Confidence Levels ($1-\alpha$)}
    \centering
        \begin{tabular*}{\columnwidth}{@{\extracolsep\fill}lllllllll@{\extracolsep\fill}}
    \toprule
    (n,p)&\multicolumn{4}{c}{$1-\alpha=0.95$}&\multicolumn{4}{c}{$1-\alpha=0.99$}\\
\cmidrule(r){2-5} \cmidrule(r){6-9}
& Standard & New & AC & Wilson & Standard & New & AC & Wilson \\ 
        (10,0.05) & 0.402 & 0.988 & 0.988 & 0.912 & 0.389 & 0.989 & 0.989 & 0.989 \\ 
        (10,0.1) & 0.647 & 0.987 & 0.927 & 0.927 & 0.637 & 0.987 & 0.987 & 0.987 \\ 
        (10,0.2) & 0.884 & 0.968 & 0.968 & 0.968 & 0.859 & 0.994 & 0.994 & 0.994 \\ 
        (30,0.05) & 0.783 & 0.985 & 0.985 & 0.939 & 0.783 & 0.997 & 0.997 & 0.997 \\ 
        (30,0.1) & 0.81 & 0.933 & 0.975 & 0.975 & 0.957 & 0.993 & 0.998 & 0.998 \\ 
        (30,0.2) & 0.946 & 0.963 & 0.963 & 0.963 & 0.989 & 0.997 & 0.997 & 0.997 \\ 
        (50,0.05) & 0.92 & 0.988 & 0.962 & 0.962 & 0.924 & 0.997 & 0.997 & 0.997 \\ 
        (50,0.1) & 0.877 & 0.941 & 0.971 & 0.971 & 0.965 & 0.997 & 0.997 & 0.997 \\ 
        (50,0.2) & 0.939 & 0.952 & 0.952 & 0.952 & 0.98 & 0.998 & 0.998 & 0.998 \\ 
        (100,0.05) & 0.878 & 0.934 & 0.965 & 0.965 & 0.963 & 0.996 & 0.999 & 0.999 \\ 
        (100,0.1) & 0.933 & 0.956 & 0.972 & 0.938 & 0.977 & 0.998 & 0.998 & 0.998 \\ 
        (100,0.2) & 0.933 & 0.954 & 0.94 & 0.94 & 0.994 & 0.997 & 0.998 & 0.998 \\ 
        \bottomrule
    \end{tabular*}
    
 \label{tab1}
 \begin{tablenotes}
 \item A summary of the data generated using the methods discussed above.
 \item[1] Entries are rounded to three significant figures.
 \end{tablenotes}
\end{table}


\begin{figure}[!ht]
    \centering
    \includegraphics[width=0.9\linewidth]{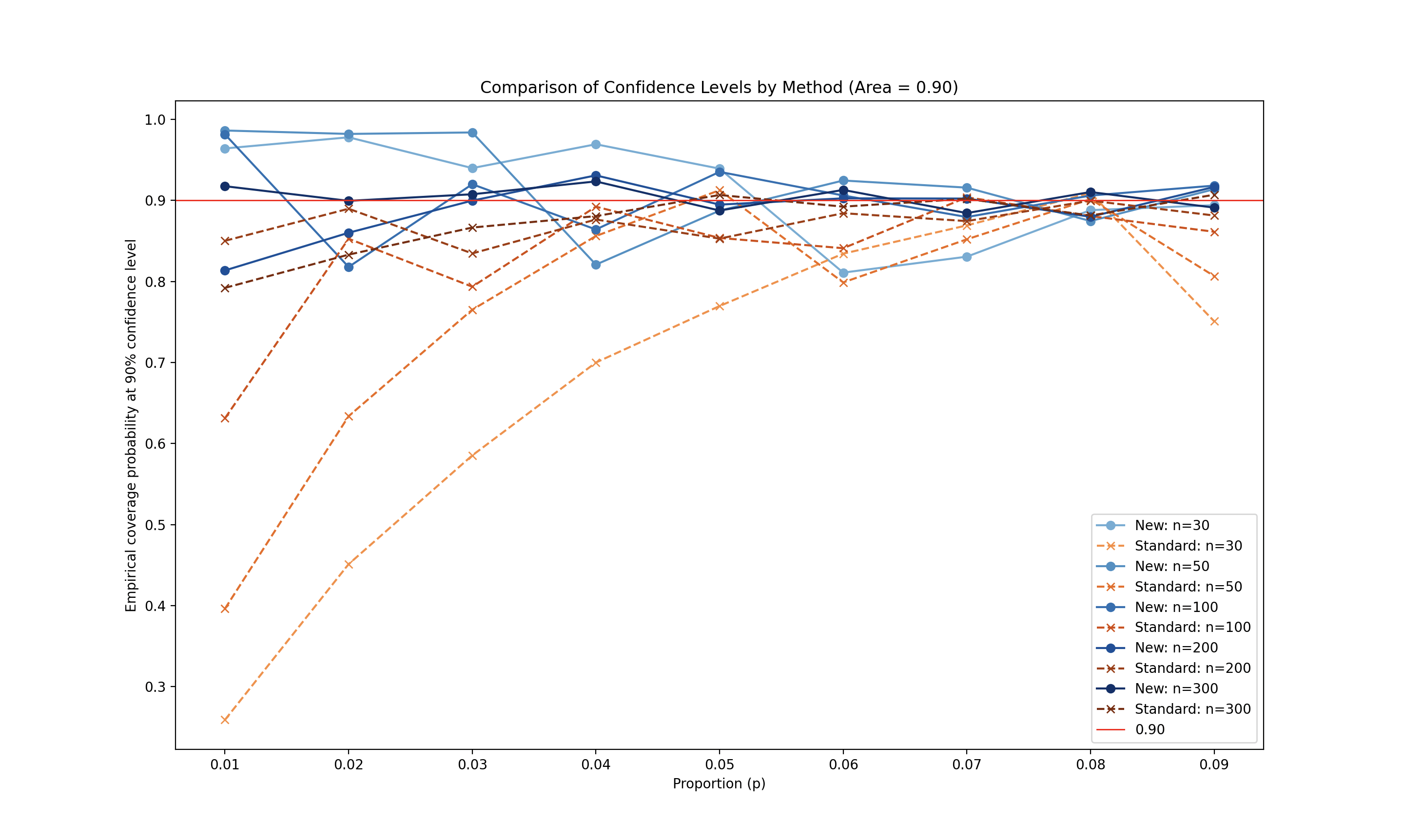}
    \caption{The full data set is graphed. The blue-shaded points represent the empirical probabilities calculated from bounds found using the correction term method, while the orange-shaded points represent empirical probabilities calculated from the normal distribution method. Confidence level = 0.90.}
    \label{fig:fig_0.9}
\end{figure}
\newpage
\begin{figure}[!ht]
    \centering
    \includegraphics[width=0.9\linewidth]{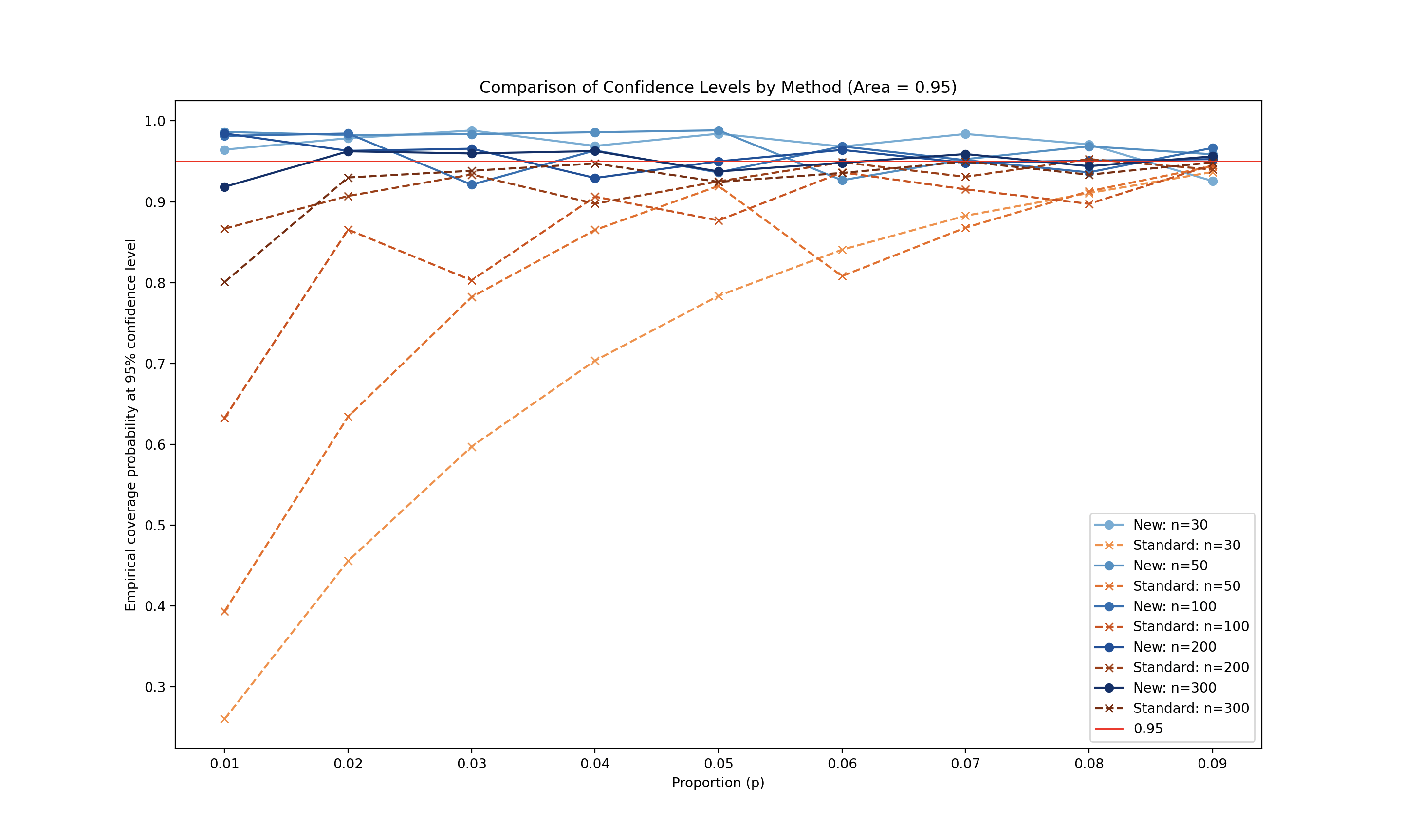}
    \caption{The full data set is graphed. The blue-shaded points represent the empirical probabilities calculated from bounds found using the correction term method, while the orange-shaded points represent empirical probabilities calculated from the normal distribution method. Confidence level = 0.95.}
    \label{fig:fig_0.95}
\end{figure}

\begin{figure}[!ht]
    \centering
    \includegraphics[width=0.9\linewidth]{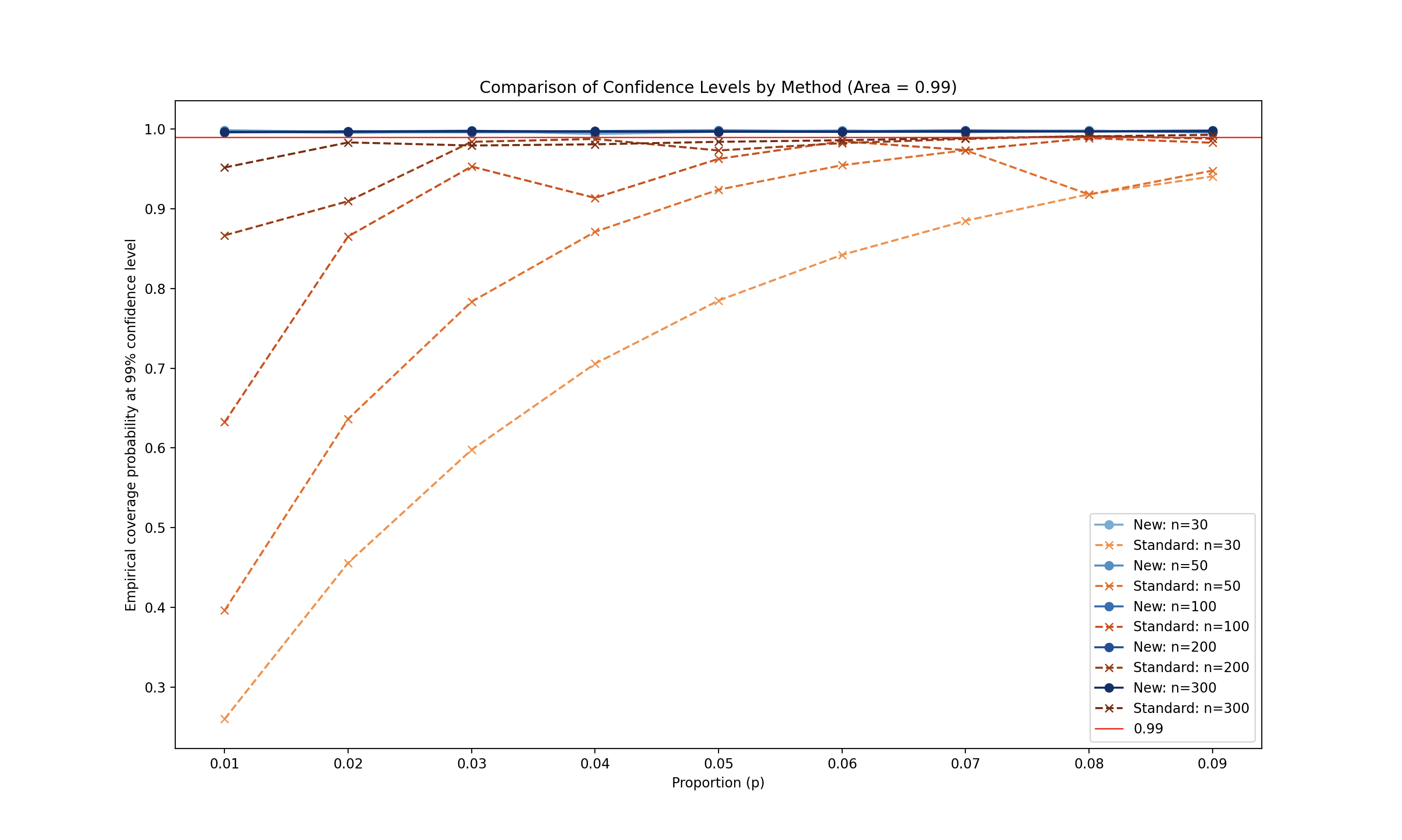}
    \caption{The full data set is graphed. The blue-shaded points represent the empirical probabilities calculated from bounds found using the correction term method, while the orange-shaded points represent empirical probabilities calculated from the normal distribution method. Confidence level = 0.99.}
    \label{fig:fig_0.99}
\end{figure}

\newpage

When examining cases where the proportion is 0.01, it becomes evident that the standard method lacks consistency across different sample sizes. In particular, the standard method performs poorly when both the sample size and the proportion are small. This issue is most apparent in the case where the sample size is $30$ and the proportion is $0.01$ as seen in Table \ref{tab1} where the standard method’s confidence interval deviates significantly from expected values. In contrast, the new method provides more stable and reliable estimates in such scenarios. However, as the proportion increases, the discrepancy between the two methods diminishes. For instance, in the case where the proportion is larger, as seen in the final row of Table \ref{tab1}, the confidence intervals generated by both methods are nearly identical. These findings suggest that while the new method, which incorporates a correction term, is particularly beneficial for small sample and small proportions.

\section{Real Data Application}
To demonstrate the application of our method, we used data from the HF-ACTION clinical trial, a multicenter, two-arm, unblinded randomized controlled trial (RCT) reported in \cite{oconnor_efficacy_2009}. This study examined whether exercise training could reduce all-cause mortality or hospitalization in patients with left ventricular dysfunction and heart failure. The primary endpoint of the trial was the time to death or all-cause hospitalization. 

We first analyzed the overall dataset to compare the performance of the new method against the standard approach. The total death proportion in the HF-ACTION dataset was approximately 0.1692. Since this proportion is relatively large, the confidence intervals produced by both methods were quite similar. Using the standard method, the confidence interval bounds were $(0.1530, 0.1853)$, while the new method produced slightly adjusted bounds of $(0.1535, 0.1857)$. This similarity suggests that for larger proportions, the new method aligns closely with the traditional approach.

To further illustrate the benefits of the new method, particularly in cases with smaller sample sizes and lower event proportions, we conducted a subgroup analysis using female participants from Region 3. In this subset, the control group had a death proportion of 0.111, while the treatment group had a much lower death proportion of 0.0333. Given the small sample size and the low event rate in the treatment group, the new method provided more stable and reliable confidence intervals compared to the standard method. 

For the control group, the standard method produced confidence interval bounds of $(0.0524, 0.1698)$, whereas the new method provided slightly wider and more robust bounds of $(0.0619, 0.1801)$. In the treatment group, the standard method produced bounds of $(-0.0013, 0.0680)$, which included an infeasible negative value, whereas the new method yielded more reasonable bounds of $(0.009, 0.0826)$. The improved reliability of the new method in this low-proportion setting is evident, as it avoids unrealistic negative bounds while maintaining appropriate coverage. 

This example illustrates that while the new method performs similarly to the standard approach for larger proportions, it provides a clear advantage when analyzing smaller subgroups or rare events. This makes it particularly useful in clinical trial analyses where subgroup assessments are necessary, but the sample sizes are limited.

\section{Conclusions}
In this paper, we propose a new method to construct confidence intervals for Binomial proportions.
The proposed confidence interval is based on a quadratic form statistic with an asymptotic chi-squared distribution. 
As an alternative to the Wald confidence interval, the proposed method mitigates practical issues such as overshooting and degeneration while providing more accurate coverage rates, particularly for small sample sizes or proportions near the boundary.

Our study suggests that while the new method, which incorporates a correction term, is always recommended against the standard confidence interval. Compared to the modified confidence intervals, such as the Wilson confidence interval and the Agresti-Coull confidence interval, it is particularly beneficial for small proportions or proportional near zero or one. In practice, we provide the following suggestion using the proposed method to construct 0.95-confidence interval. For extremely small samples where $n\le 10$, the method is recommended over the others when $p\le 0.2$ or $p\ge 0.8$. For moderately small samples with $n>10$, the proposed method is advantageous when $p\le 0.1$ or $p\ge 0.9$. For the other cases with moderately large sample size and $p$ off the boundary, the proposed confidence interval performs similarly to Wilson's confidence interval and Agresti-Coull's confidence interval, and they are all preferred over the standard confidence interval.

\newpage






\printbibliography

\section*{Appendix}
In Figures \ref{fig:coerage_in_n} and \ref{fig:E_ME_in_n}, we present the coverate rates and expected margin of error for the standard, the proposed, the Agresti-Coull, and the Wilson confidence interval respectively at 0.95 confidence levels against the sample size $n$ at $p=0.01,0.05,0.1$ and $0.2$. The figures confirm that the proposed method has accurate and sufficient coverage rates and parsimonious widths at small $n$ and $p$ near-the-boundary.

We also present results at confidence levels 0.9 and 0.99 in Figures \ref{fig:coverage_prob0.9} to \ref{fig:E_ME_in_n99}. At 0.9 confidence level, the coverage rate of the proposed method is more reliable for small $n$ and $p$ near the boundary ($p\le 0.1$ when $n\le 10$), and Agresti-Coull is preferred when moderately large sample is available. For the $0.99$ confidence level, the proposed method is preferrable as it has sufficient coverage and a parsimonous coverage rate for large and small $n$ across $0\le p\le 1$. 

\begin{figure}
    \centering
    \includegraphics[width=1\linewidth]{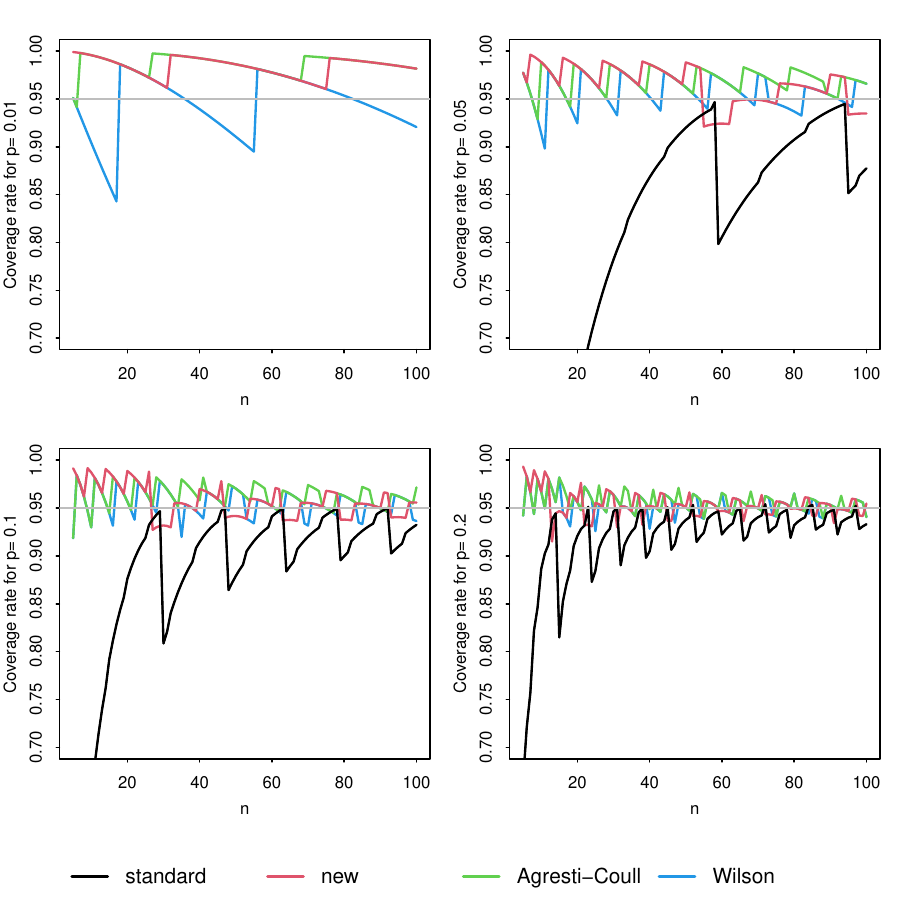}
    \caption{Expected margins of error of the 0.95- Standard CI,  proposed CI, the Agresti-Coull CI, and Wilson's CI for $p=0.01,0.05,0.1$ and $0.2$ at varying $n=5,6,\ldots,100$. }
    \label{fig:coerage_in_n}
\end{figure}
\begin{figure}
    \centering
    \includegraphics[width=1\linewidth]{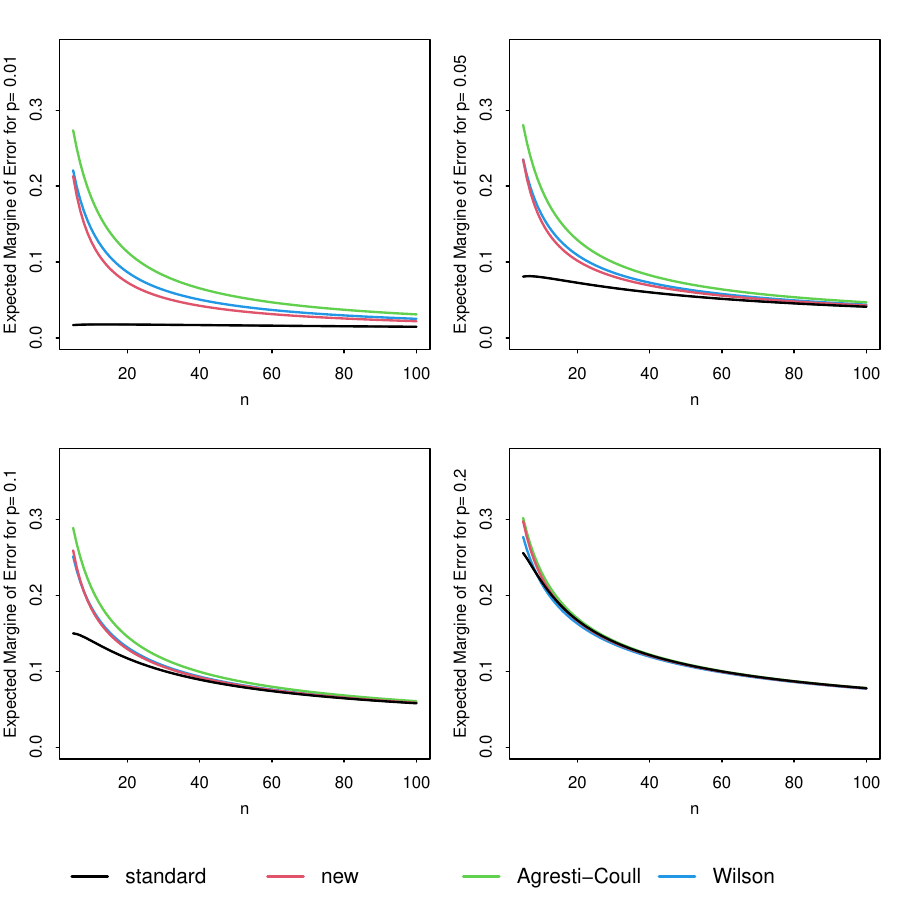}
    \caption{Expected margins of error of the 0.95- Standard CI,  proposed CI, the Agresti-Coull CI, and Wilson's CI for $p=0.01,0.05,0.1$ and $0.2$ at varying $n=5,6,\ldots,100$. }
    \label{fig:E_ME_in_n}
\end{figure}

\begin{figure}
    \centering
    \includegraphics[width=1\linewidth]{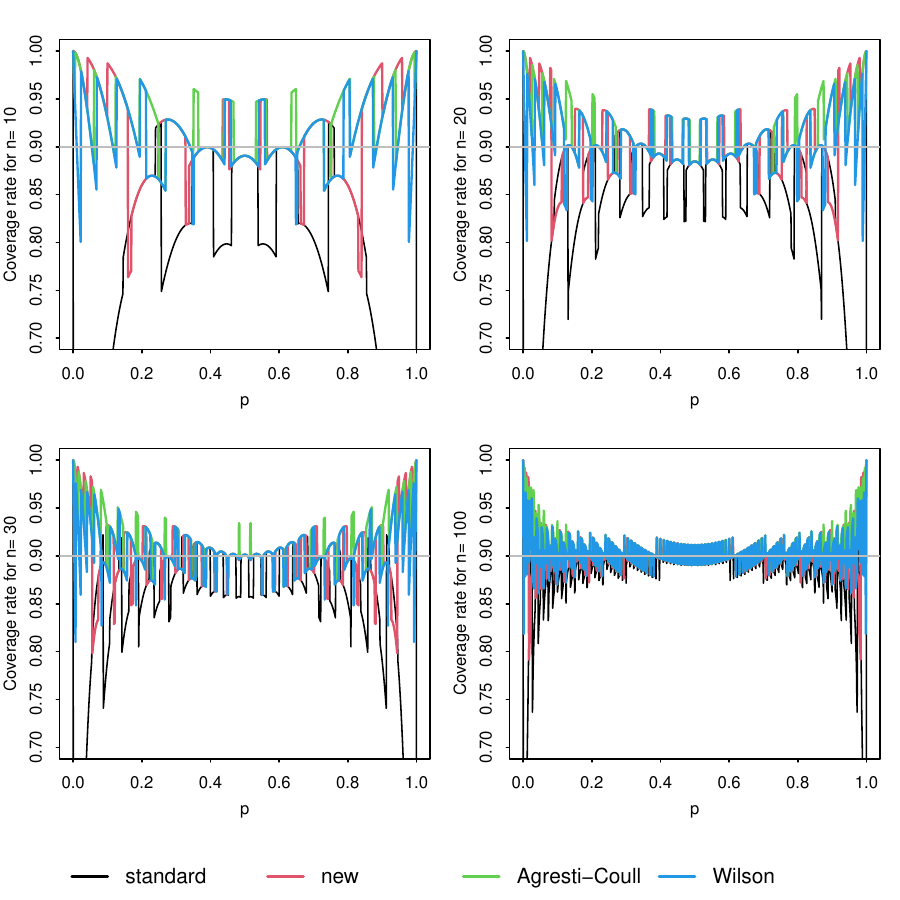}
    \caption{Coverage Probabilities of the 0.9- Standard CI,  proposed CI, the Agresti-Coull CI, and Wilson's CI for $n=10,20,30$ and $100$ at varying $p\in [0,1]$. }
    \label{fig:coverage_prob0.9}
\end{figure}

\begin{figure}
    \centering
    \includegraphics[width=1\linewidth]{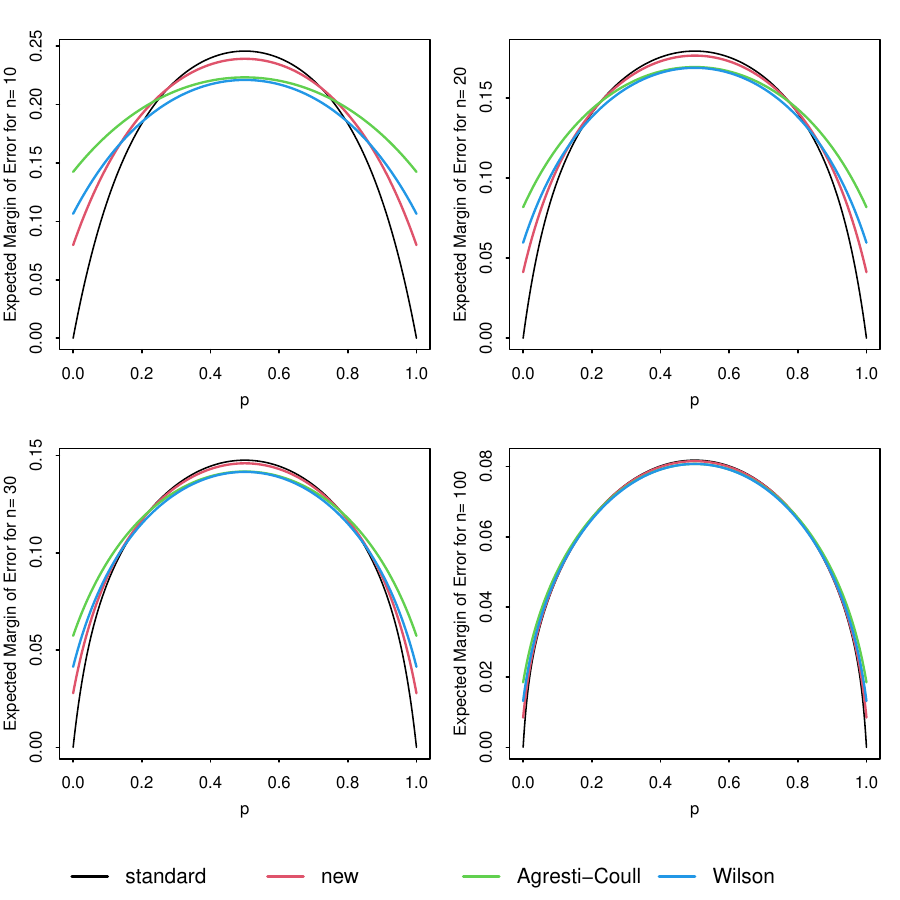}
    \caption{Expected margins of error of the 0.9- Standard CI,  proposed CI, the Agresti-Coull CI, and Wilson's CI for $n=10,20,30$ and $100$ at varying $p\in [0,1]$. }
    \label{fig:E_ME0.9}
\end{figure}
\begin{figure}
    \centering
    \includegraphics[width=1\linewidth]{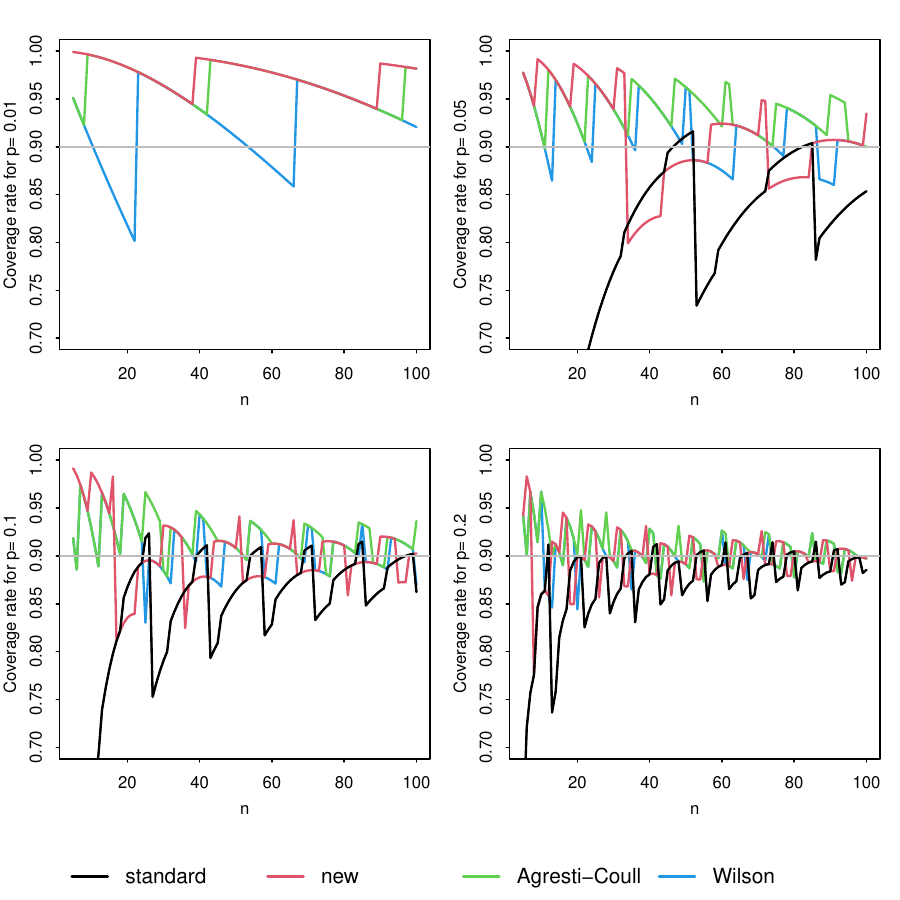}
    \caption{Expected margins of error of the 0.9- Standard CI,  proposed CI, the Agresti-Coull CI, and Wilson's CI for $p=0.01,0.05,0.1$ and $0.2$ at varying $n=5,6,\ldots,100$. }
    \label{fig:coerage_in_n9}
\end{figure}
\begin{figure}
    \centering
    \includegraphics[width=1\linewidth]{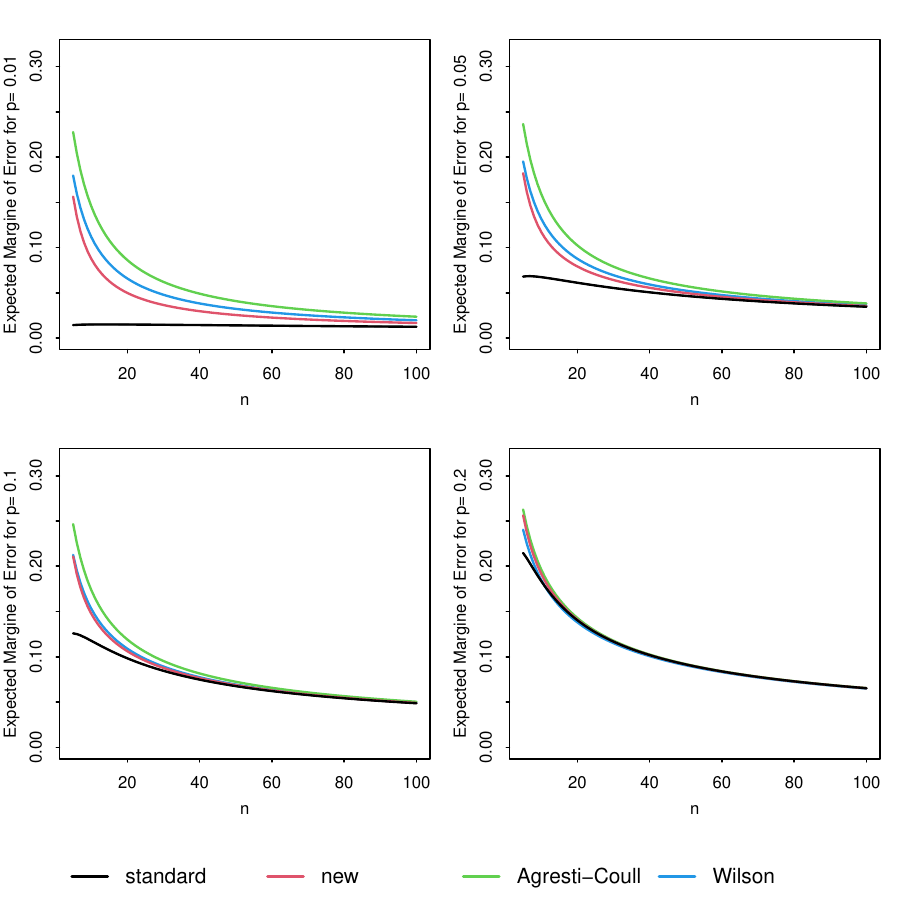}
    \caption{Expected margins of error of the 0.9- Standard CI,  proposed CI, the Agresti-Coull CI, and Wilson's CI for $p=0.01,0.05,0.1$ and $0.2$ at varying $n=5,6,\ldots,100$. }
    \label{fig:E_ME_in_n9}
\end{figure}

\begin{figure}
    \centering
    \includegraphics[width=1\linewidth]{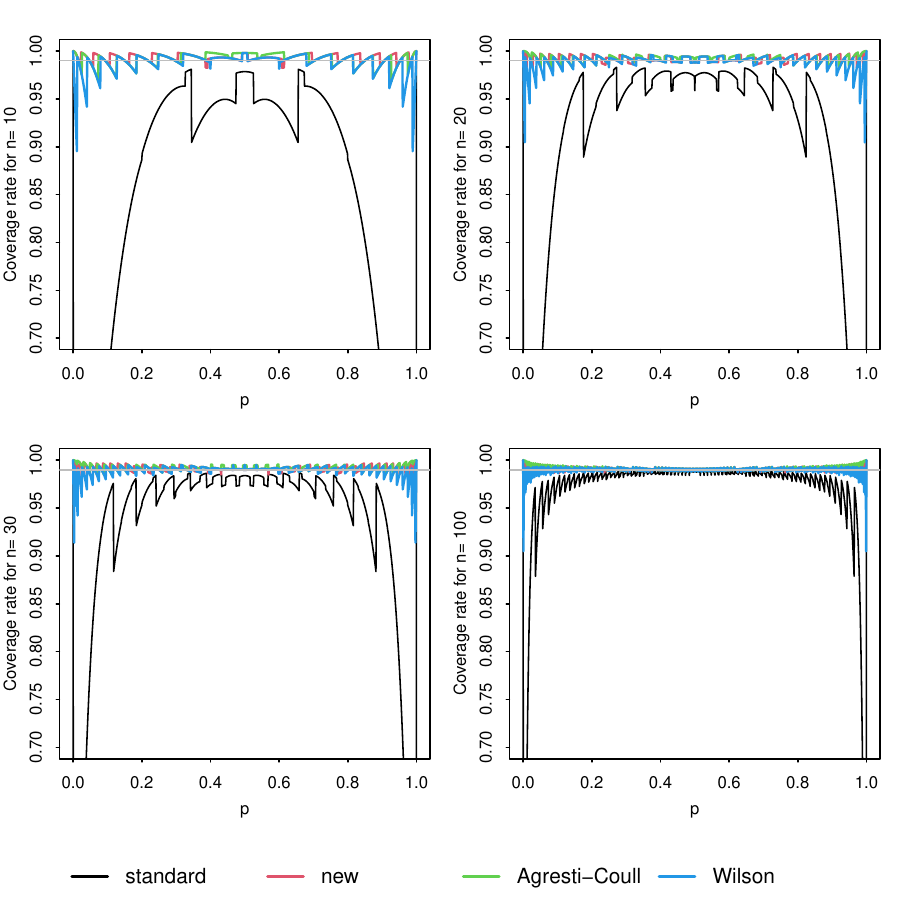}
    \caption{Coverage Probabilities of the 0.99- Standard CI,  proposed CI, the Agresti-Coull CI, and Wilson's CI for $n=10,20,30$ and $100$ at varying $p\in [0,1]$. }
    \label{fig:coverage_prob0.99}
\end{figure}

\begin{figure}
    \centering
    \includegraphics[width=1\linewidth]{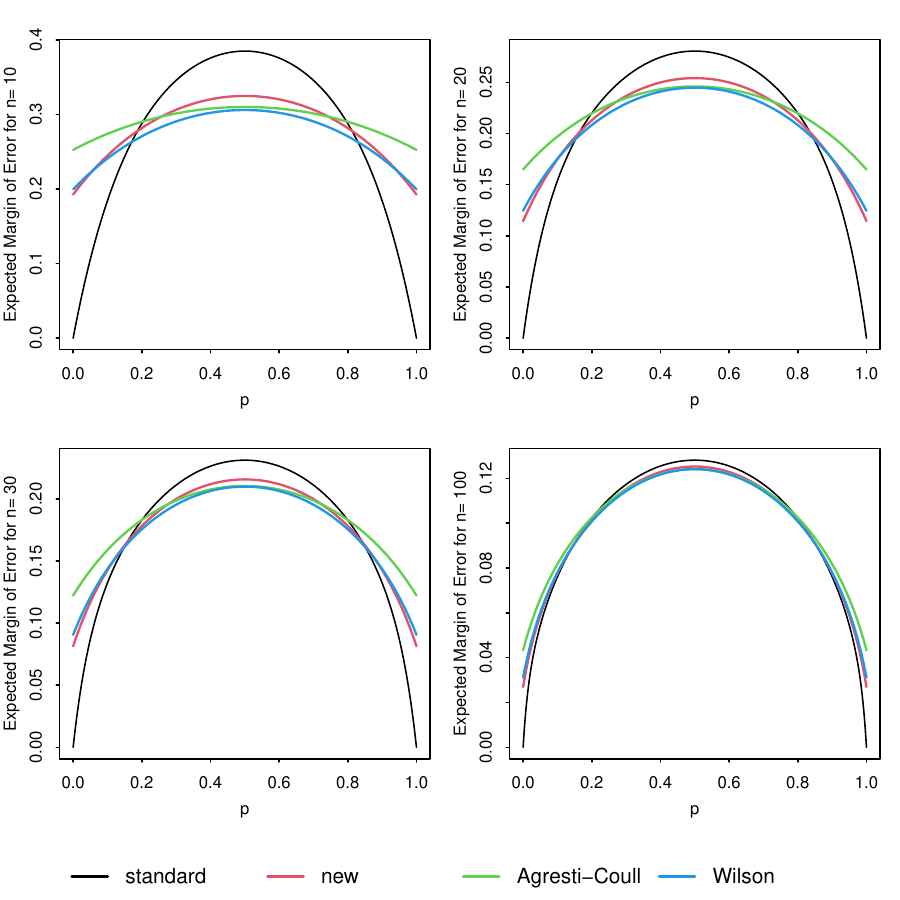}
    \caption{Expected margins of error of the 0.99- Standard CI,  proposed CI, the Agresti-Coull CI, and Wilson's CI for $n=10,20,30$ and $100$ at varying $p\in [0,1]$. }
    \label{fig:E_ME0.99}
\end{figure}
\begin{figure}
    \centering
    \includegraphics[width=1\linewidth]{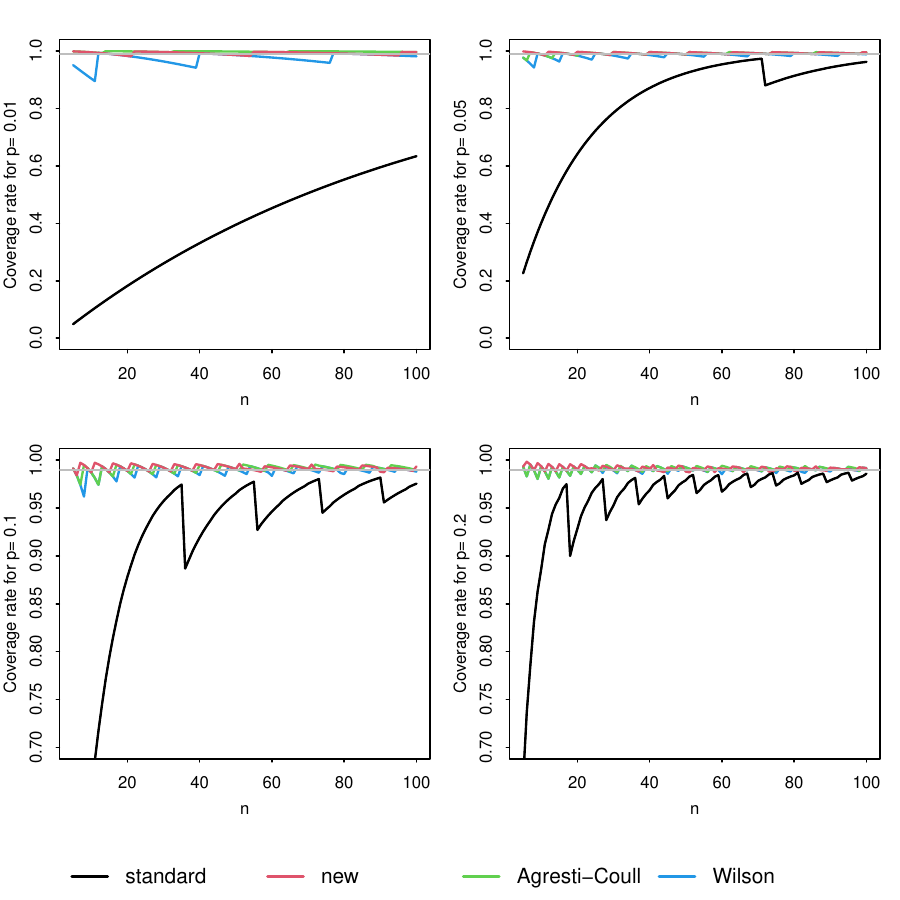}
    \caption{Expected margins of error of the 0.99- Standard CI,  proposed CI, the Agresti-Coull CI, and Wilson's CI for $p=0.01,0.05,0.1$ and $0.2$ at varying $n=5,6,\ldots,100$. }
    \label{fig:coerage_in_n99}
\end{figure}
\begin{figure}
    \centering
    \includegraphics[width=1\linewidth]{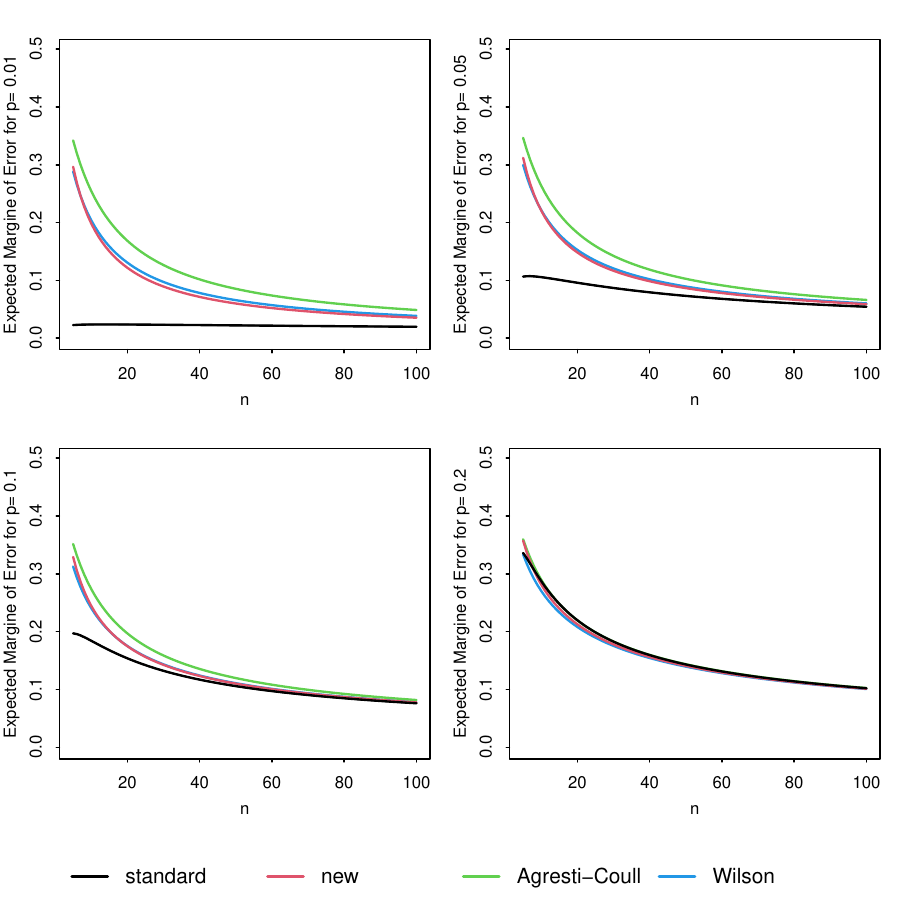}
    \caption{Expected margins of error of the 0.99- Standard CI,  proposed CI, the Agresti-Coull CI, and Wilson's CI for $p=0.01,0.05,0.1$ and $0.2$ at varying $n=5,6,\ldots,100$. }
    \label{fig:E_ME_in_n99}
\end{figure}

\end{document}